\definecolor{darkred}  {rgb}{0.5,0,0}
\definecolor{darkblue} {rgb}{0,0,0.5}
\definecolor{darkgreen}{rgb}{0,0.5,0}
\newcommand{\be}{\begin{equation}}
\newcommand{\ee}{\end{equation}}
\newcommand{\ba}{\begin{array}}
\newcommand{\ea}{\end{array}}
\newcommand{\bea}{\begin{eqnarray}}
\newcommand{\eea}{\end{eqnarray}}
\newcommand{\br}[1]{\left(#1\right)}
\newtheorem{lemma}{Lemma}
\newtheorem{theorem}{Theorem}
\newtheorem*{theorem*}{Theorem}
\newcommand{\footremember}[2]{%
    \footnote{#2}
    \newcounter{#1}
    \setcounter{#1}{\value{footnote}}%
}
\newcommand{\footrecall}[1]{%
    \footnotemark[\value{#1}]%
}
\title{Beyond product state approximations for a quantum analogue of Max Cut } 
\author{  Anurag Anshu \footremember{iqc}{Institute for Quantum Computing, University of Waterloo, Canada} \footremember{co}{Department of Combinatorics and Optimization, University of Waterloo, Canada}\footremember{pi}{Perimeter Institute for Theoretical Physics, Canada}%
   \and David Gosset \footrecall{iqc} \footrecall{co}%
	\and Karen Morenz \footrecall{iqc} \footrecall{co} \footremember{UT}{Department of Chemistry, University of Toronto, Canada}
  }
\date{}
\begin{document}
\maketitle

\begin{abstract}
We consider a computational problem where the goal is to approximate the maximum eigenvalue of a two-local Hamiltonian that describes Heisenberg interactions between qubits located at the vertices of a graph. Previous work has shed light on this problem's approximability by \textit{product states}. For any instance of this problem the maximum energy attained by a product state is lower bounded by the Max Cut of the graph and upper bounded by the standard Goemans-Williamson semidefinite programming relaxation of it.  Gharibian and Parekh described an efficient classical approximation algorithm for this problem which outputs a product state with energy at least $0.498$ times the maximum eigenvalue in the worst case, and observe that there exist instances where the best product state has energy $1/2$ of optimal. We investigate approximation algorithms with performance exceeding this limitation which are based on optimizing over tensor products of few-qubit states and shallow quantum circuits. We provide an efficient classical algorithm which achieves an approximation ratio of at least $0.53$ in the worst case. We also show that for any instance defined by a $3$ or $4$-regular graph, there is an efficiently computable shallow quantum circuit that prepares a state with energy larger than the best product state (larger even than its semidefinite programming relaxation).

\end{abstract}

\section{Introduction}
In this paper we continue a line of recent work which aims to understand the power and limitations of approximation algorithms for quantum constraint satisfaction problems. Consider an $n$-qubit local Hamiltonian of the form
\begin{equation}
H=\sum_{ij} h_{ij}.
\label{eq:ham}
\end{equation}
Here each term $h_{ij}$ is a Hermitian operator which acts nontrivially only on qubits $i$ and $j$ and we shall assume that $h_{ij}\geq 0$. Estimating the maximum energy $\|H\|$ is a quantum constraint satisfaction problem which is a special case of the well-studied $2$-local Hamiltonian problem, and it is known that computing an estimate of $\|H\|$ within a given small additive error $\epsilon=1/\mathrm{poly}(n)$ is QMA-complete \cite{kitaev2002classical, kempe2004complexity}. Consequently, this sort of precise estimate is unlikely to admit efficient algorithms. 
An estimate $\lambda$ is an $r$-approximation of $\|H\|$, or achieves approximation ratio $r$, if
\[
r \leq \frac{\lambda}{\|H\|}\leq 1.
\]
The classical PCP theorem places stringent bounds on the efficiency of good approximation algorithms for this problem even in the special case where $H$ is diagonal in the computational basis. It states that there exists a constant $r<1$ such that  computing an $r$-approximation to $\|H\|$ is NP-hard \cite{arora1998proof}.   A major open question in this area is whether or not the problem is in fact QMA-hard for some $r<1$. Whereas the standard PCP theorem already implies hardness of approximation, the quantum PCP conjecture targets the more fine-grained question of whether or not such approximations can be checked efficiently given a concise classical witness. These considerations also motivate the study of efficient classical or quantum algorithms for such quantum approximation problems, as measured by the achievable approximation ratio.

A natural way to establish a lower bound $\|H\|\geq \alpha$ is to exhibit a state $|\phi\rangle$ satisfying $\langle \phi|H|\phi\rangle \geq \alpha$.  Several previous works have bounded the approximation ratios that can be achieved by product states $\phi=\phi_1\otimes \phi_2\otimes \ldots \phi_n$ \cite{gharibian2012approximation, brandao2016product, harrow2017extremal, bravyi2019approximation, gharibian2019almost}. Gharibian and Kempe have shown that there always exists a product state which achieves an approximation ratio $r=0.5$ \cite{gharibian2012approximation}. This is also easily seen to be the best possible approximation guarantee for product states, as there are simple examples which saturate this bound. It is not known if a product state achieving a ratio $1/2$ can be computed efficiently in the general case; the most recent progress is an efficient algorithm which outputs a product state that achieves a ratio of $r=0.328$ \cite{qip2020report}. On the other hand, it is known that efficient classical algorithms can achieve approximation ratios arbitrarily close to $1$ if we are willing to specialize to certain families of $2$-local Hamiltonians. Such algorithms are known if the graph which describes the nonzero interactions between qubits is either (a) a $d=O(1)$ dimensional lattice, (b) a planar graph \cite{bansal2007classical, brandao2016product} or (c) dense graphs, in which the number of edges is close to maximal, i.e. $\Omega(n^2)$ \cite{gharibian2012approximation, brandao2016product}.

For completeness, we note that Ref. \cite{bravyi2019approximation} considers a different approximation problem for Hamiltonians where the terms $h_{ij}$ are traceless (rather than positive semidefinite) and describes an efficient $r=O(1/\log(n))$ approximation algorithm based on product states, generalizing the classical result of Charikar and Wirth \cite{charikar2004maximizing}. A related work  \cite{harrow2017extremal} considers a slightly different notion of approximation ratio, again achieved by product states in the traceless setting.

An $n$-qubit product state is an appealing generalization of a classical $n$-bit string, and has the desirable feature that it can be manipulated and stored efficiently by classical algorithms. Moreover, some of the known approximation algorithms for classical constraint satisfaction problems which are based on semidefinite programming have a natural extension to product states \cite{bravyi2019approximation, gharibian2019almost}. But how good are these algorithms, and can we hope to do better using efficient algorithms that are based on entangled states? Of course, most $n$-qubit states do not have concise classical descriptions and cannot even be prepared efficiently using a quantum computer. Since we are aiming for efficient algorithms, we shall restrict our attention to entangled quantum states prepared by polynomial size quantum circuits. 

We shall focus our attention on a specific family of Hamiltonians studied previously in Ref. \cite{gharibian2019almost} which defines a quantum analogue of the Max Cut problem. Unless otherwise specified, throughout this paper we shall consider graphs $G = (V,E,w)$ with nonnegative edge weights $w: E\rightarrow \mathbb{R}_{\geq 0}$, and we write $n=|V|$. We shall also assume that the maximum edge weight is upper bounded by $O(n^{c})$ for some $c=O(1)$.  

For completeness we begin by reviewing facts about the classical Max Cut problem. Recall that the maximum cut of a weighted graph $G$ is defined to be 
\begin{equation}
\mathrm{MC}(G)=\max_{z\in \{\pm 1\}^n}  \mathrm{Cut}_G(z) \qquad \text{where} \qquad \mathrm{Cut}_G(z)=\sum_{\{i,j\} \in E} \frac{w_{ij}}{2} ( I - z_iz_j).
\label{eq:mc}
\end{equation}
An approximation algorithm for the Max Cut problem due to Goemans and Williamson \cite{goemans1995improved} is based on the following semidefinite programming relaxation of Eq.~\eqref{eq:mc}:
\begin{equation}
\mathrm{SDP}(G)=\max_{M\in \mathbb{R}^{n\times n}: M\geq 0, \mathrm{diag}(M)=I} \sum_{\{i,j\} \in E} \frac{w_{ij}}{2} ( I - M_{ij}).
\label{eq:sdp}
\end{equation}

A matrix $M$ achieving the maximum SDP value $\mathrm{SDP}(G)$ can be computed efficiently using standard classical algorithms. The Goemans-Williamson algorithm then uses a randomized procedure which maps $M$ to a bit string $z$ which is guaranteed to satisfy
\begin{equation}
\mathrm{Cut}_G(z)\geq 0.8785 \cdot \mathrm{SDP}(G)
\label{eq:gw}
\end{equation}
for all graphs $G$  \cite{goemans1995improved}.

The quantum Max Cut problem as considered in Ref. \cite{gharibian2019almost} is defined by a family of local Hamiltonians Eq.~\eqref{eq:ham} where each term $h_{ij}$ is proportional to the two-qubit singlet state $|s\rangle=\sqrt{2}^{-1}(|01\rangle-|10\rangle)$. In particular, given a graph $G=(V,E,w)$ we define
\begin{equation} 
H_G = \sum_{\{i,j\} \in E} w_{ij} h_{ij} \qquad \quad h_{ij} = \frac{1}{2}\left(I - X_iX_j - Y_iY_j - Z_iZ_j\right)=2|s\rangle\langle s|_{ij}.
\label{eq:hg}
\end{equation}

We are interested in approximating the maximum eigenvalue of $H_G$ which we write as
\[
\mathrm{OPT}(G)=\|H_G\|.
\]
Estimating this quantity can be viewed as a quantum analogue of the classical Max Cut problem. Indeed, a constraint $(I-z_iz_j)$ in the Max Cut problem Eq.~\eqref{eq:mc} has maximal energy when the corresponding two entries disagree, i.e., $z_i\neq z_j$. Analogously, a constraint $h_{ij}$ in the Hamiltonian Eq.~\eqref{eq:hg} has maximal energy for a quantum state $|\psi\rangle$ when the two qubits are antisymmetric under swap, i.e., $SWAP_{ij}|\psi\rangle=-|\psi\rangle$. In this sense, the classical and quantum constraints represent two different notions of disagreement. 

Piddock and Montanaro have shown that the problem of computing a precise estimate of $\mathrm{OPT}(G)$ is QMA-complete \cite{piddock2015complexity}, and recent work has focused on its approximability using product states \cite{gharibian2019almost}. Let us now see how the problem of optimizing the energy of Eq.~\eqref{eq:hg} over product states is directly related to the Max Cut problem Eq.~\eqref{eq:mc} and its semidefinite relaxation Eq.~\eqref{eq:sdp}. An $n$-qubit product state $\phi$ can be specified (up to a global phase) by $n$ normalized vectors $v^{(j)}\in \mathbb{R}^3$:
\[
|\phi\rangle\langle \phi|=\bigotimes_{j=1}^{n} \frac{1}{2}\left(I+v^{(j)}_1 X+v^{(j)}_2 Y+v^{(j)}_3 Z\right) \qquad \qquad \|v^{(j)}\|=1,
\]
and its energy is given by
\begin{equation}
\mathrm{Tr}\left[|\phi\rangle\langle\phi| H_G\right]=  \sum_{\{i,j\} \in E} \frac{w_{ij}}{2} (1-v^{(i)}\cdot v^{(j)}).
\end{equation}
Defining 
\begin{equation}
\alpha(k)=\max_{\{v_i\in \mathbb{R}^k: \|v_i\|=1\}} \sum_{\{i,j\} \in E} \frac{w_{ij}}{2} (1-v_i\cdot v_j).
\end{equation}
we see that 
\[
\alpha(1)=\mathrm{MC}(G)\leq \mathrm{PROD}(G)=\alpha(3)\leq \mathrm{SDP}(G)=\alpha(n).
\]
The Goemans-Williamson algorithm for the Max Cut problem has been generalized by Briet, de Oliveira Filho, and Vallentin to obtain efficient algorithms for approximating $\alpha(k)$ for $1<k<n$ \cite{briet2010positive}. The resulting approximation ratios obtained become larger as $k$ increases towards $k=n$ where the optimal value can be computed efficiently and exactly by semidefinite programming. Their result for the case $k=3$ at hand is summarized below.
\begin{theorem}[\cite{briet2010positive}]
There exists an efficient randomized classical algorithm which computes an estimate $\mu$ such that
\[
0.956 \leq \frac{\mu}{\mathrm{PROD}(G)}\leq 1.
\]
\label{thm:prodapprox}
\end{theorem}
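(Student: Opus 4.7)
The plan follows the random-projection SDP rounding scheme introduced by Briet, de Oliveira Filho, and Vallentin. First I would solve the unconstrained semidefinite program $\mathrm{SDP}(G)=\alpha(n)$ from Eq.~\eqref{eq:sdp} to polynomial accuracy using standard interior-point methods, obtaining a feasible $M\succeq 0$ with $\mathrm{diag}(M)=I$. Cholesky decomposition then yields unit vectors $u_1,\ldots,u_n\in\mathbb{R}^n$ with $u_i\cdot u_j=M_{ij}$. Since $\alpha(n)\geq\alpha(3)=\mathrm{PROD}(G)$, it suffices to round the $u_i$ to unit vectors $v_i\in\mathbb{R}^3$ while losing at most a factor $0.956$ relative to $\mathrm{SDP}(G)$; any such rounding automatically produces a feasible point for the $\alpha(3)$ optimization, which forces $\mu\leq\mathrm{PROD}(G)$.

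For the rounding itself, sample a matrix $Z\in\mathbb{R}^{3\times n}$ with i.i.d.\ $\mathcal{N}(0,1)$ entries and set $v_i=Zu_i/\|Zu_i\|$. By rotational invariance of the isotropic Gaussian, $\mathbb{E}[v_i\cdot v_j]$ depends only on the inner product $\rho=u_i\cdot u_j$, so one may write $\mathbb{E}[v_i\cdot v_j]=f_3(\rho)$ for a universal function $f_3:[-1,1]\to[-1,1]$ with $f_3(\pm 1)=\pm 1$. Concretely, $f_3(\rho)$ is the expectation of $(g\cdot h)/(\|g\|\|h\|)$ for a pair of jointly Gaussian three-dimensional vectors $(g,h)$ whose covariance is parametrized by $\rho$, and admits a closed form as a hypergeometric series. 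The algorithm outputs
\[
\mu=\sum_{\{i,j\}\in E}\frac{w_{ij}}{2}\bigl(1-v_i\cdot v_j\bigr).
\]

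The heart of the argument is the inequality $1-f_3(\rho)\geq \kappa_3(1-\rho)$ for all $\rho\in[-1,1]$, where
\[
\kappa_3\;=\;\min_{\rho\in[-1,1)}\frac{1-f_3(\rho)}{1-\rho}\;\geq\; 0.956.
\]
Granting this bound, linearity of expectation gives
\[
\mathbb{E}[\mu]\;\geq\;\kappa_3\sum_{\{i,j\}\in E}\frac{w_{ij}}{2}\bigl(1-u_i\cdot u_j\bigr)\;=\;\kappa_3\cdot\mathrm{SDP}(G)\;\geq\;0.956\cdot\mathrm{PROD}(G),
\]
and $\mathrm{poly}(n)$ independent trials return, with overwhelming probability, a sample matching this bound. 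The assumption that edge weights are at most $O(n^c)$ ensures that all SDP solves and Monte Carlo estimates can be performed to adequate precision in polynomial time.

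The main obstacle is the explicit constant $\kappa_3\geq 0.956$: the function $f_3$ is non-elementary and the extremum is attained at a non-trivial value of $\rho$. Briet--de Oliveira Filho--Vallentin handle this by expanding $f_3$ into a series of Gegenbauer (Jacobi) polynomials on $[-1,1]$, which are the zonal spherical harmonics of $S^2$, and then using positivity of the coefficients together with a rigorous numerical evaluation on a suitably fine grid (with error control from explicit bounds on $f_3'$) to certify the minimum. Everything else---the SDP relaxation, the Gaussian rounding template, and the derandomization---follows the standard Goemans--Williamson template with no essential change.
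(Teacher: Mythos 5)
The paper does not prove this statement itself: Theorem~\ref{thm:prodapprox} is imported verbatim from Bri\"{e}t, de Oliveira Filho, and Vallentin \cite{briet2010positive}, so there is no in-paper proof to compare against. Your proposal is a correct reconstruction of the argument in that reference --- solve the rank-unconstrained SDP, project the Cholesky vectors through a random $3\times n$ Gaussian matrix and normalize, observe that the rounded vectors are automatically feasible for $\alpha(3)$ (giving $\mu\leq \mathrm{PROD}(G)$), and certify the edgewise inequality $1-f_3(\rho)\geq 0.956\,(1-\rho)$ via the hypergeometric closed form of $f_3$; the probability amplification is the standard reverse-Markov step, which works because the true constant exceeds $0.956$ strictly.
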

This algorithm (and other randomized algorithms discussed in this paper) may fail with some small probability, say $0.01$, in which case the output of the algorithm is a flag indicating failure.

Since $\mathrm{PROD}(G)\geq 0.5 \cdot \mathrm{OPT}(G)$ \cite{gharibian2012approximation}, the algorithm described in Theorem \ref{thm:prodapprox} can be used to approximate $\mathrm{OPT}(G)$ with ratio at least $0.5\cdot 0.956=0.478$, as observed in Ref.~\cite{GharibianLiu20}. The recently proposed approximation algorithm of Gharibian and Parekh  \cite{gharibian2019almost} is based on rounding a solution to a semidefinite program relaxation of $\mathrm{OPT}(G)$, and obtains an even higher ratio of $0.498$. The authors of Ref. \cite{gharibian2019almost} note that their algorithm is almost optimal (as far as product states are concerned), since there exists a very simple graph---just two vertices connected by a single weight one edge--- for which the optimal product state is equal to $0.5\cdot \mathrm{OPT}(G)$.

Our first result shows that if all edge weights are equal then this limitation of product states only occurs in small graphs. That is, for sufficiently large connected graphs with uniform weights, it is always possible to efficiently find a product state with a strictly larger approximation ratio:
\begin{theorem}
Suppose $G=(V,E,w)$ is a connected and unweighted graph, i.e., $w_{ij}=1$ for all $\{i,j\}\in E$. Then 
\begin{equation}
\frac{\mathrm{PROD}(G)}{\mathrm{OPT}(G)}\geq  \frac{4}{7}-O(|E|^{-1}).
\end{equation}
The efficient randomized algorithm from Theorem \ref{thm:prodapprox} computes an $r$-approximation to $\mathrm{OPT}(G)$, where $r\geq 0.546-O(|E|^{-1})$.
\label{thm:prod1}
\end{theorem}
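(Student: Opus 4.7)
The plan is to establish two separate estimates — an upper bound on $\mathrm{OPT}(G)$ and a lower bound on $\mathrm{PROD}(G)$ for every connected unweighted $G$ — and combine them by elementary algebra. The algorithmic claim then follows immediately from Theorem~\ref{thm:prodapprox}, since $0.956 \cdot (4/7) > 0.546$.

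\emph{Upper bound on $\mathrm{OPT}(G)$.} For each vertex $i$, the operator $H_i = \sum_{j : \{i,j\} \in E} h_{ij}$ is the Heisenberg antiferromagnet on the star centred at $i$ with $\deg(i)$ leaves. One diagonalises it by coupling the central spin $\vec\sigma_i/2$ to the collective leaf spin $\vec{S}_L = \tfrac{1}{2}\sum_{j \sim i}\vec\sigma_j$, finding that $\|H_i\| = \deg(i) + 1$; equivalently $H_i \leq (\deg(i) + 1)\,I$. Summing over all vertices double-counts each edge, so
\[
2H_G = \sum_i H_i \leq \sum_i (\deg(i) + 1)\,I = (2|E| + n)\,I,
\qquad\text{i.e.}\qquad
\mathrm{OPT}(G) \leq |E| + n/2 .
\]

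\emph{Lower bound on $\mathrm{PROD}(G)$.} I would prove the stronger statement
\[
\mathrm{PROD}(G) \;\geq\; \tfrac{3}{7}\,\mathrm{OPT}(G) \;+\; \tfrac{1}{7}\bigl(|E| + n/2\bigr) \;-\; O(1),
\]
which, combined with the star upper bound $\mathrm{OPT}(G) \leq |E| + n/2$, immediately yields $\mathrm{PROD}(G) \geq (4/7)\,\mathrm{OPT}(G) - O(1)$. The plan is a randomised-rounding argument: from an optimal state $\rho^{\star}$ for $H_G$, build an explicit distribution over pure product states whose expected energy contains two contributions — a $(3/7)\,\mathrm{OPT}(G)$ term coming from the 3D-Bloch-vector approximation of the single-qubit marginals of $\rho^{\star}$, plus an additional $(1/7)(|E| + n/2)$ bonus coming from a cut-style perturbation that mirrors the per-vertex structure of the star inequality above. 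Connectedness and uniform edge weights are used to guarantee that the construction is well defined and that its contribution scales linearly with both $|E|$ and $n$, keeping the error additive $O(1)$.

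\emph{Combining.} Dividing by $\mathrm{OPT}(G) = \Theta(|E|)$ (which holds for connected $G$ since $\mathrm{OPT}(G) \geq \mathrm{PROD}(G) \geq |E|/2$) gives $\mathrm{PROD}(G)/\mathrm{OPT}(G) \geq 4/7 - O(|E|^{-1})$. Applying Theorem~\ref{thm:prodapprox} to $G$ yields an estimate $\mu$ with $\mu \geq 0.956 \cdot \mathrm{PROD}(G)$, and therefore
\[
\frac{\mu}{\mathrm{OPT}(G)} \;\geq\; 0.956 \cdot \frac{4}{7} - O(|E|^{-1}) \;\geq\; 0.546 - O(|E|^{-1}) .
\]

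\emph{Main obstacle.} The hardest step is the lower bound on $\mathrm{PROD}(G)$. Classical Edwards-type bounds such as $\mathrm{MC}(G) \geq |E|/2 + (n-1)/4$ only produce asymptotic ratio $1/2$ when combined with the star upper bound, so the proof must systematically extract an extra $\Omega(|E|)$ contribution — plausibly by exploiting the 3D rotational freedom of Bloch vectors around odd cycles (on a triangle, $\mathrm{PROD} = 9/4$ already beats $\mathrm{MC} = 2$), or by tailoring the rounding so that its per-vertex expected contribution matches the $\deg(i) + 1$ star bound. Producing a construction that composes cleanly with the upper bound to give exactly $4/7$, and with only $O(1)$ additive slack, is the crux of the argument.
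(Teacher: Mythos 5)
Your upper bound $\mathrm{OPT}(G)\leq |E|+n/2$ is correct and is exactly the paper's Eq.~\eqref{eq:upp1}: decompose $2H_G=\sum_v h_v$ into vertex-centred stars, bound each star norm by $\deg(v)+1$ (Lemma~\ref{lem:star}), and sum. The gap is the lower bound on $\mathrm{PROD}(G)$. You reduce the theorem to the claim $\mathrm{PROD}(G)\geq \tfrac{3}{7}\mathrm{OPT}(G)+\tfrac{1}{7}(|E|+n/2)-O(1)$, but you give no construction that proves it --- only a plan involving rounding the one-qubit marginals of an optimal state plus an unspecified ``cut-style perturbation,'' and you yourself flag this as the unresolved crux. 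Note also that this intermediate claim is essentially as strong as the theorem itself (combined with the star bound it gives exactly $\tfrac{4}{7}$ with no slack), so restating it as a lemma does not make progress; and rounding the marginals of $\rho^\star$ to Bloch vectors is only known to recover a $\tfrac12$ fraction of each edge's energy in general, so it is not clear how your scheme would beat the Gharibian--Kempe-type analysis.

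The paper's route avoids any rounding of the optimal state and is much more elementary. Write $H_G=\tfrac{|E|}{2}+H^X+H^Y+H^Z$ with $H^P=-\tfrac12\sum_{\{i,j\}\in E}P_iP_j$; by symmetry the three maximal eigenvalues $\lambda^P_{\max}$ coincide, so the triangle inequality gives $\mathrm{OPT}(G)\leq \tfrac{|E|}{2}+3\lambda^Z_{\max}$, while the best \emph{computational basis} state already achieves $\tfrac{|E|}{2}+\lambda^Z_{\max}$. Eliminating $\lambda^Z_{\max}$ yields
\begin{equation*}
\mathrm{PROD}(G)\;\geq\;\tfrac13\,\mathrm{OPT}(G)+\tfrac13|E|,
\end{equation*}
and combining with $\mathrm{OPT}(G)\leq|E|+n/2$ gives ratio $\tfrac13+\tfrac23\cdot\tfrac{|E|}{2|E|+n}$. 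This alone is not enough when the graph is sparse ($n\approx|E|$), so the paper adds a second bound: a $2$-colouring of a spanning tree cuts at least $|V|-1$ edges, so $\mathrm{PROD}(G)\geq|V|-1$, giving ratio at least $\tfrac{2(|V|-1)}{2|E|+|V|}$. Setting $x=(|V|-1)/|E|\in[0,1]$ and minimizing the maximum of the two bounds over $x$ gives the crossover value $\tfrac47$ at $x=4/5$. If you want to salvage your write-up, replace your unproven rounding lemma with these two explicit product-state constructions and the balancing argument.
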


In the more general setting where the weights may not be uniform, one can of course construct examples of connected graphs where all weights are vanishingly small except for the weight of a single edge. In this limit we already know that it is impossible to beat a ratio of $0.5$ using product states. Our next result shows that by considering tensor products of one- and two-qubit states it is possible to guarantee a strictly better approximation ratio.
\begin{theorem}
Let $G=(V,E,w)$ be a weighted graph. Then there is a tensor product $\phi=\phi_1\otimes \phi_2\otimes\ldots \phi_L$ of $1$- and $2$-qubit states $\{\phi_j\}$ such that
\[
\frac{\langle \phi|H_G|\phi\rangle} {\mathrm{OPT}(G)}\geq 0.55.
\]
Moreover, there is an efficient randomized algorithm which outputs an $r$-approximation to $\mathrm{OPT}(G)$, where $r\geq 0.53$.
\label{thm:prod2}
\end{theorem}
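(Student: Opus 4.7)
The plan is to construct two candidate tensor-product states and return the one with higher energy. The first candidate is a near-optimal product state, produced either by the Gharibian--Parekh SDP rounding or by the Briet et al. algorithm of Theorem~\ref{thm:prodapprox}, with guaranteed energy at least $0.498\cdot\mathrm{OPT}(G)$. The second candidate exploits two-qubit entanglement: compute a maximum-weight matching $M$ of $G$ classically (via Edmonds' weighted matching algorithm in polynomial time), place a two-qubit singlet $|s\rangle$ on each edge of $M$, and fill the remaining qubits $V\setminus V(M)$ with an approximately optimal product state for the induced subgraph.

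The second candidate's energy is easy to analyze. Each singlet saturates the maximum eigenvalue $2w_e$ of $h_e$ on its matching edge. Crucially, when a qubit is paired in a singlet its reduced density matrix is $I/2$, so every edge with at least one endpoint in $V(M)$ contributes exactly $w_e/2$ regardless of what sits on the other endpoint. On edges entirely inside $V\setminus V(M)$ the product state yields at least the trivial lower bound $w_e/2$ (achieved for instance by the maximally mixed state). Summing these contributions gives a second-candidate energy of at least $\tfrac{3}{2}w(M)+\tfrac{1}{2}w(E)$.

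To turn this into an approximation ratio I would use the SWAP identity $h_e=I-\mathrm{SWAP}_e$, which recasts the problem as $\mathrm{OPT}(G)=w(E)-\lambda_{\min}\!\bigl(\sum_e w_e\,\mathrm{SWAP}_e\bigr)$, and then invoke a monogamy-of-entanglement bound on sums of overlapping SWAP operators (at any vertex $v$ the minimum eigenvalue of $\sum_{u\sim v}w_{vu}\mathrm{SWAP}_{vu}$ is controlled by the heaviest incident edge). Summed carefully over vertices this yields an upper bound of the form $\mathrm{OPT}(G)\leq w(M^*)+w(E)$. Setting $\gamma := w(M^*)/w(E)$ and taking the maximum of the two candidate ratios then gives a guarantee of at least $\max\!\bigl(((3/2)\gamma+1/2)/(1+\gamma),\,0.498\bigr)$.

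The main obstacle is that this naive two-candidate analysis does \emph{not} quite reach $0.55$ when $\gamma$ is very small: in that regime both candidates are stuck near $1/2$. Overcoming this likely requires a refinement such as: (i) placing \emph{partially} entangled pair states on $M$, parameterized by a small angle $\alpha$, accepting a loss of $\Theta(\alpha^2)$ per matching edge in exchange for a gain of $\Theta(\alpha)$ on every edge incident to $V(M)$ (this helps precisely when matching-adjacent edges are plentiful, i.e.\ when $\gamma$ is small); (ii) choosing the matching from a solution to an SDP relaxation of $\mathrm{OPT}(G)$ so that the matching edges carry the largest SDP contributions, rather than just the largest graph weights; or (iii) interleaving the two strategies through a joint SDP rounding that outputs either a singlet or a product rotation on each edge according to local SDP data. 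I expect the technical heart of the proof to be the analysis of this tradeoff and the optimization of the relevant parameters to yield $0.55$ existentially and $0.53$ efficiently; the product-state part and the matching construction are each essentially off-the-shelf, while the balancing argument is where the new content lies.
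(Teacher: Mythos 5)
Your proposal correctly identifies one of the paper's key ingredients---singlets on a matching, with the energy accounting $\tfrac{3}{2}w(M)+\tfrac{1}{2}w(E)$ matching Eq.~\eqref{eq:matchenergy} exactly---but it has two genuine gaps. First, the upper bound $\mathrm{OPT}(G)\leq w(M^*)+w(E)$ is asserted rather than proved, and the derivation you sketch does not yield it: summing the star bound (Lemma~\ref{lem:star}) over vertices gives only $\mathrm{OPT}(G)\leq w(E)+\tfrac{1}{2}\sum_{v}\max_{e\sim v}w_e$, and the quantity $\tfrac{1}{2}\sum_v\max_{e\sim v}w_e$ can vastly exceed $w(M^*)$ (for a star on $d+1$ vertices with unit weights it is $(d+1)/2$ versus $w(M^*)=1$). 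The paper works with exactly this weaker vertex-summed bound and copes with it via Lemma~\ref{lem:matchfor}, which rewrites $\sum_v\max_{e\sim v}w_e=f+m$ for a specific forest $F$ and matching $M\subseteq F$; the point is that the \emph{same} $f$ and $m$ then reappear in the lower bounds, since $F$ admits a $2$-coloring cut state of energy $\geq f$ and $M$ carries the singlets. Your maximum-weight matching $M^*$ is not tied to the upper bound in this way.

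Second, and more importantly, you correctly diagnose that your two candidates are both stuck near $1/2$ when the matching weight is small, but none of your proposed fixes (i)--(iii) is the one that works, and none is carried out. The paper's resolution is a third candidate of a different character: from the decomposition $H_G=\tfrac{W}{2}+H^X+H^Y+H^Z$ and the equality of the three maximal eigenvalues $\lambda^X_{\max}=\lambda^Y_{\max}=\lambda^Z_{\max}$, one gets $\mathrm{PROD}(G)\geq\tfrac{1}{3}\mathrm{OPT}(G)+\tfrac{1}{3}W$ (the argument of Theorem~\ref{thm:dense}), hence the ratio bound Eq.~\eqref{eq:compenergy}, which tends to $\tfrac{1}{3}+\tfrac{2}{3}\cdot\tfrac{1}{2}=\tfrac{2}{3}$ precisely in the regime where $f+m\ll W$ and your candidates fail. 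The $0.55$ guarantee then comes from a three-way case analysis, minimizing $\max\{2y,\,3x+1,\,\tfrac{1}{3}(4+x+y)\}/(2+x+y)$ over $0\leq x\leq y\leq 1$ with $x=m/W$, $y=f/W$. Replacing the Gharibian--Parekh $0.498$ product state (which by itself can never certify more than $0.498\cdot\mathrm{OPT}$) with this structured third candidate, and tying the matching and forest to the vertex-summed upper bound, are the missing ideas; without them the balancing argument you anticipate cannot close.
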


Theorem \ref{thm:prod2} provides the best currently known efficient approximation algorithm for this problem, improving slightly on Ref. \cite{gharibian2019almost}. Moreover, it establishes that although there exist graphs where the best product state is only $1/2$ of the optimal energy, efficient classical algorithms can go slightly beyond this ratio. 

Our next result shows that, for a family of low-degree graphs it is possible to efficiently beat product states on \textit{every} graph from the family. In particular, given any $3$- or $4$-regular graph $G$, we can efficiently compute a constant-depth quantum circuit which prepares a state with energy strictly larger than the best product state energy $\mathrm{PROD}(G)$ (in fact, larger than its semidefinite relaxation $\mathrm{SDP}(G)$).

\begin{theorem}
Suppose $G=(V,E,w)$ is a $k$-regular graph with $k\in \{3,4\}$. There is a depth-$(k+1)$ quantum circuit $U(G)$ that can be efficiently computed by a randomized classical algorithm such that the state $|\phi\rangle=U(G)|0^n\rangle$ approximates $\mathrm{OPT}(G)$ with a strictly larger ratio than that of any product state. Moreover,
\[
\frac{\langle \phi| H_G |\phi\rangle}{\mathrm{PROD}(G)}\geq \frac{\langle \phi| H_G |\phi\rangle}{\mathrm{SDP}(G)} >1.001
\] 
\label{thm:lowdegree}
\end{theorem}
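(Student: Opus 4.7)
The plan is to construct the depth-$(k+1)$ circuit $U(G)$ as a shallow QAOA-style ansatz: a single entangling layer scheduled by a proper edge-coloring of $G$, applied on top of a carefully chosen initial product state. Vizing's theorem guarantees a proper edge coloring with $\chi'(G)\leq k+1$ colors for any $k$-regular $G$, and within a color class the two-qubit factors act on disjoint qubits, so the overall layered unitary has depth at most $k+1$.

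First I would use Theorem~\ref{thm:prodapprox} to obtain a product state $|\phi_0\rangle=\bigotimes_i |\psi(v^{(i)})\rangle$ whose energy is close to $\mathrm{PROD}(G)$. For each edge I would then apply $\exp(i\theta\, h_{ij})$ with a universal constant angle $\theta$, scheduling these factors by color class so that the total depth is $k+1$. Writing $|\phi\rangle = U(\theta)|\phi_0\rangle$ I would expand $\langle\phi|H_G|\phi\rangle$ to second order in $\theta$: the zeroth-order piece is $\langle\phi_0|H_G|\phi_0\rangle$; the linear term is a sum of commutator expectations $\langle\phi_0|[h_{ij},H_G]|\phi_0\rangle$, which I would arrange to vanish (or to contribute coherently) by imposing a symmetry condition on the Bloch vectors $\{v^{(i)}\}$; the quadratic term organizes into local contributions from commutators $[h_{ij},[h_{ij},h_{i\ell}]]$ between an edge $\{i,j\}$ and each of its at most $2(k-1)$ neighbors $\{i,\ell\},\{j,\ell'\}$.

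For $k\in\{3,4\}$ each such local neighborhood has constant size ($2(k-1)\in\{4,6\}$ neighboring edges), so the second-order correction can be analyzed one edge at a time. I would lower-bound this correction by a strictly positive constant times the total edge weight, uniformly over all $k$-regular graphs and independently of the global structure, by enumerating the possible local configurations of $\{v^{(i)}\}$ around an edge and optimizing $\theta$. Combined with $\mathrm{PROD}(G)\leq \mathrm{SDP}(G)$ and the fact that SDP rounding, edge-coloring, and gate compilation all admit efficient randomized classical algorithms, this yields the claimed bound $\langle\phi|H_G|\phi\rangle \geq (1+\delta)\,\mathrm{SDP}(G)$ with $\delta>0.001$.

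The main obstacle is the bipartite worst case, e.g.\ $K_{3,3}$, in which $\mathrm{PROD}(G)=\mathrm{SDP}(G)=|E|$ and naive singlet-matching attains exactly $|E|$ without beating it: here the second-order gains at nonmatched edges must be shown to dominate the losses at matched edges after the perturbation, rather than merely balancing. Establishing this uniformly requires a delicate case analysis of the local (constant-radius) structure around each edge in a $k\in\{3,4\}$-regular graph, together with quantitative tuning of the angle $\theta$ and possibly of the starting vectors $\{v^{(i)}\}$; I expect this local optimization to be the technical crux and to be what pins down the explicit numerical constant $1.001$.
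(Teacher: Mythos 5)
Your plan shares the paper's general flavor (a depth-$(k+1)$ layer of two-qubit gates scheduled by an edge coloring, applied to an efficiently computable starting state), but as written it has gaps that would prevent it from closing. First, the starting point is wrong quantitatively: Theorem~\ref{thm:prodapprox} only guarantees a product state with energy $\geq 0.956\cdot\mathrm{PROD}(G)$, so you begin with a deficit of up to $4.4\%$ of $\mathrm{PROD}(G)$ before you have even addressed the gap between $\mathrm{PROD}(G)$ and $\mathrm{SDP}(G)$ or the extra $0.1\%$ you must clear. A "second-order correction" would have to recover roughly $5\%$ of $\mathrm{SDP}(G)$, and nothing in your sketch shows a gain of that magnitude. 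Second, a Taylor expansion to second order at a \emph{constant} angle $\theta$ is not a proof unless you control the remainder; you never do, and the local case analysis you defer to is exactly the content that would have to be supplied. Third, your gates $\exp(i\theta h_{ij})$ on edges sharing a vertex do \emph{not} commute (e.g.\ $[X_iX_j, Y_iY_k]\neq 0$), so even the second-order term depends on the gate ordering within your coloring, and an exact evaluation of the energy becomes unwieldy.

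The paper avoids all three problems by a different construction: it starts from a Goemans--Williamson cut $|z\rangle$ satisfying Eq.~\eqref{eq:gw} (not from an approximately optimal product state), and applies the \emph{commuting} gates $e^{i\theta P(j)P(k)}$ with $P(j)\in\{X_j,Y_j\}$ chosen according to the bit $z_j$. Commutativity permits an exact closed-form computation of the energy of every edge (Lemma~\ref{lem:gengraph}); the choice of $P$ makes the first-order-in-$\theta$ contribution of every satisfied edge positive and coherent, while unsatisfied edges contribute nonnegatively. The bound then follows from the single inequality $0.8785\cdot F(\theta^{\star}(d),d)/2>1.001$ for $d\in\{3,4\}$, with no perturbative remainder to control and no dependence on $\mathrm{PROD}(G)$ at all. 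Your worry about bipartite instances such as $K_{3,3}$ is dissolved in this framework: there every edge is satisfied by the cut and contains no triangles, which is in fact the most favorable case for Eq.~\eqref{eq:satdreg}.
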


The low depth quantum circuit used in Theorem \ref{thm:lowdegree} is inspired (and similar to) the quantum approximate optimization algorithm described by Farhi, Goldstone, and Gutmann \cite{farhi2014quantum}. The circuit is directly obtained from any computational basis state $z\in \{0,1\}^n$ with a large enough cut value $\mathrm{Cut}_G(z)$\footnote{Note that we previously defined $\mathrm{Cut}_G(z)$ for inputs $z\in \{-1,1\}^n$.  Here and below we extend this definition to bit string inputs $z\in \{0,1\}^n$ by identifying each bit $z_j$ with the $\pm 1$-valued variable $(-1)^{z_j}$.}; in particular, it is sufficient to use a bit string satisfying Eq.~\eqref{eq:gw} which can be computed efficiently using the Goemans-Williamson algorithm. The quantum computation starts from the computational basis state $|z\rangle$ and then applies a low-depth quantum circuit composed of a sequence of commuting two-qubit gates of the form 
\[
e^{i\theta P(j) P(k)}
\]
where for each qubit $v$ we choose a Pauli operator $P(v)\in \{X_v, Y_v\}$ depending only on the bit $z_v$. To prove the theorem we compute the energy of this state as a function of the variational parameter $\theta$ and then optimize. 

In summary, we have shown that for the quantum Max Cut problem there are efficient algorithms which beat any approximation algorithm based on product states. A natural open question is whether this is also true for the more general problem of approximating the maximum energy of a two-local Hamiltonian Eq.~\eqref{eq:ham}. One may also ask if the semidefinite programming method \cite{goemans1995improved} can be used in some novel way to efficiently obtain approximation ratios which go beyond the limitations of product states. For the quantum Max Cut problem, Ref. \cite{gharibian2019almost} provides a semidefinite program which upper bounds the optimal energy $\mathrm{OPT}(G)$. A central challenge here is that we do not (yet) know a randomized rounding scheme which maps an SDP solution to an entangled state.

\section{Tensor products of few qubit states}
In this Section, we prove Theorems \ref{thm:prod1} and \ref{thm:prod2}. We shall use the following upper bound for the special case where $G$ is a star graph. The lemma shows that the maximum energy for any star with at least $3$ vertices is always less than the trivial upper bound $2\sum_{e\in E}{w_e}$ which comes from the triangle inequality. This can be interpreted as a consequence of the monogamy of entanglement--the center spin cannot be maximally entangled with all of the points of the star. Along similar lines,  Ref.~\cite{GharibianLiu20} provides a different upper bound on $\|H_G\|$ using a monogamy of entanglement bound known as the Coffman-Kundu-Wooters inequality.
\begin{lemma}
Suppose $G=(V,E,w)$ is a star graph with nonnegative weights. Then 
\begin{equation}
\|H_G\|\leq \max_{e\in E} w_e+ \sum_{e\in E}w_e.
\label{eq:lam}
\end{equation}
\label{lem:star}
\end{lemma}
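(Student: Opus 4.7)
The plan is to prove the bound in two stages: first settle the uniform-weight case via an explicit total-angular-momentum calculation, and then extend to arbitrary weights by a convexity / polytope argument.

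\textbf{Reformulation.} Label the center vertex by $0$ and the leaves by $1,\dots,m$ with weights $w_i \ge 0$. Using $h_{0i}=\tfrac12(I-\vec\sigma_0\cdot\vec\sigma_i)$ write
\[
H_G=\tfrac{S}{2}\,I-\tfrac12\,T,\qquad T:=\sum_{i=1}^m w_i\,\vec\sigma_0\cdot\vec\sigma_i,\qquad S:=\sum_i w_i.
\]
Since $H_G\succeq 0$ we have $\|H_G\|=\tfrac{S}{2}-\tfrac12\lambda_{\min}(T)$, so the claim \eqref{eq:lam} is equivalent to $\lambda_{\min}(T)\ge -(S+2w_{\max})$.

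\textbf{Uniform case.} When every weight equals a common value $c$, $T=c\,\vec\sigma_0\cdot\vec\Sigma$ with $\vec\Sigma=\sum_i\vec\sigma_i$. Introducing the total leaf spin $\vec S_L=\tfrac12\vec\Sigma$ and the total spin $\vec J=\tfrac12\vec\sigma_0+\vec S_L$, the identity $\vec J^2=\vec S_L^2+\tfrac34+\vec\sigma_0\cdot\vec S_L$ together with the coupling rule $J\in\{S_L\pm\tfrac12\}$ yields $\vec\sigma_0\cdot\vec\Sigma\in\{2S_L,\,-2S_L-2\}$ in each total-spin sector. The minimum is attained at the maximal leaf spin $S_L=m/2$ in the $J=S_L-\tfrac12$ branch, so $\lambda_{\min}(\vec\sigma_0\cdot\vec\Sigma)=-m-2$ and hence $\|H_G\|=(m+1)c$, which matches the bound $S+w_{\max}=(m+1)c$ with equality.

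\textbf{Convexity reduction.} View $w\mapsto\|H_G(w)\|$ as a function of the weight vector. Since $\|H_G(w)\|=\max_{|\psi\rangle}\sum_i w_i\langle\psi|h_{0i}|\psi\rangle$ is a pointwise supremum of functions linear in $w$, it is convex in $w$. Fix $W=\max_i w_i$ and consider the box $P=\{w\in\mathbb{R}^m:0\le w_i\le W\}$; the candidate bound $L(w)=\sum_i w_i+W$ is linear on $P$. At a vertex of $P$ each entry lies in $\{0,W\}$; if $m'$ of them equal $W$ then the induced sub-star is uniform and the previous step gives $\|H_G(w)\|=(m'+1)W=L(w)$ when $m'\ge 1$, and $\|H_G(w)\|=0\le W=L(w)$ when $m'=0$. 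A convex function dominated by a linear function at every vertex of a polytope is dominated by it throughout the polytope, so $\|H_G(w)\|\le L(w)$ on all of $P$. Specializing to the original weight vector (for which $\max_i w_i=W$) gives $\|H_G\|\le S+w_{\max}$.

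\textbf{Main obstacle.} The substantive step is the uniform-case evaluation $\lambda_{\min}(\vec\sigma_0\cdot\vec\Sigma)=-m-2$, which is where monogamy enters: the center spin cannot be simultaneously anti-aligned with a large total leaf moment beyond this angular-momentum ceiling. The tightness of this value at every vertex is precisely what makes the convexity step close up. A direct attack on $\|T\|$ for arbitrary weights via $T^2=\sum_a B_a^2-2\vec\sigma_0\cdot\sum_i w_i^2\vec\sigma_i$ (with $B_a=\sum_i w_i\sigma^a_i$) is cluttered by the commutators $[B_a,B_b]=2i\sum_i w_i^2\sigma^c_i$, so routing through the uniform case by convexity is the cleaner path.
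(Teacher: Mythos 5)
Your proof is correct, but it takes a genuinely different route from the paper. The paper invokes the Lieb--Mattis theorem to locate the top eigenvector of $H_G$ for a star (a complete bipartite graph with $|B|=1$) in the $S_z=|V|/2-1$ sector, identifies the restriction of $H_G$ to that single-excitation subspace with the graph Laplacian $L(G)$, and then bounds $\|L(G)\|$ by $\sum_{e}w_e+\max_e w_e$ via a diagonal similarity transformation and Gershgorin's circle theorem. You instead compute the uniform-weight star exactly by addition of angular momentum, obtaining $\|H_G\|=(m+1)c$ (a value the paper only asserts in a remark after the proof), and then reduce the general weighted case to this one by observing that $w\mapsto\|H_G(w)\|$ is a pointwise supremum of linear functionals, hence convex, while the target bound is affine on the box $[0,w_{\max}]^m$; checking the vertices of the box, each of which is a uniform sub-star (or the zero Hamiltonian), closes the argument. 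Both proofs are sound. Yours is more self-contained --- it needs neither Lieb--Mattis nor the Gershgorin/similarity trick, only elementary spin coupling and the fact that a convex function on a polytope is maximized at a vertex --- and the convexity reduction is a reusable device for extending uniform-weight spectral bounds to arbitrary nonnegative weights. The paper's route yields slightly more structural information for non-uniform weights (the extremal eigenvector always lies in the single-excitation subspace, where the problem becomes purely combinatorial), which is not visible from the convexity argument alone. One small point worth making explicit in your write-up: the domination of a convex function by an affine one on a polytope follows because their difference is convex and therefore attains its maximum over the polytope at an extreme point; you state the conclusion but the one-line justification belongs in the proof.
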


\begin{proof}
Define the total spin operators
\[
\vec{S}=\frac{1}{2}\left(\sum_{j\in V} X_j, \sum_{j\in V} Y_j, \sum_{j\in V} Z_j\right).
\]
Let $S_x=\frac{1}{2}\sum_{j\in V} X_j$, $S_y= \frac{1}{2}\sum_{j\in V} Y_j$, $S_z= \frac{1}{2}\sum_{j\in V} Z_j$ and note that the Hamiltonian Eq.~\eqref{eq:hg}, $S^2=S_x^2+S_y^2+S_z^2$, and $S_z$ are mutually commuting. It is shown in Ref. \cite{lieb1962ordering} that the maximum eigenvalue of Eq.~\eqref{eq:hg} on any (nonnegatively) weighted complete bipartite graph with bipartition $V=A\sqcup B$ is attained by an eigenvector $\phi$  which satisfies
\[
S^2 |\phi\rangle = s(s+1)|\phi\rangle \qquad \quad S_z|\phi\rangle=s|\phi\rangle \qquad \qquad s=\left(|A|-|B|\right)/2.
\]
A star graph is a complete bipartite graph with $|A|=|V|-1$ and $|B|=1$. Therefore the result of Lieb and Mattis implies that a maximum eigenvalue is attained by a state $\phi$ satisfying $S_z|\phi\rangle=|V|/2-1$. In particular, $\phi$ is equal to the maximum eigenvector of $H_{G}$ restricted to the $|V|$-dimensional subspace 
\[
\mathcal{Q}=\mathrm{span}\{|100\ldots0\rangle,|010\ldots 0\rangle,\ldots, |00\ldots,01\rangle\}
\]
spanned by computational basis states with Hamming weight equal to $1$. It is easily seen that the Hamiltonian restricted to this subspace is the Laplacian matrix of $G$. More precisely,
\[
H_{G}|_{\mathcal{Q}}=L(G)
\]
where $L(G)$ is the graph Laplacian of $G$,  defined by 
\[
L(G)_{ij}=\begin{cases} \sum_{e\sim i} w_e &  i=j\\ -w_{e} & e=\{i,j\}\in E\\ 0 & otherwise.\end{cases}
\]
The lemma follows by upper bounding the norm of the Laplacian of a star graph
\begin{equation}
\|L(G)\|\leq \sum_{e\in E} w_e+\max_{e\in E} w_e.
\label{eq:laplaciannorm}
\end{equation}
The upper bound Eq.~\eqref{eq:laplaciannorm} is obtained using an argument from Ref. \cite{merris1998note}. In particular, we note that
\[
\|L(G)\|=\|W^{-1} L(G)W\| 
\]
where $W$ is a diagonal matrix such that $W_{ii}=\sum_{e\sim i} w_e$, and then use Gershgorin's circle theorem to upper bound the right hand side. Computing the Gershgorin discs for a star graph we arrive at
\begin{align}
\|W^{-1} L(G)W\|& \leq \max \left\{\sum_{e\in E} w_e+\left(\sum_{e\in E} w_e\right)^{-1}\sum_{e\in E} w_e^2 , \quad \max_{e\in E} w_e+\sum_{e\in E}w_e\right\}\\
&=\max_{e\in E} w_e+\sum_{e\in E}w_e.
\end{align}
\end{proof}
We note that for a star graph with uniform weights the upper bound Eq.~\eqref{eq:lam} becomes an equality, as can be seen using the rules for addition of angular momentum. 
\begin{comment}
Write $S_k$ for the $k$-vertex star graph. We will need the following fact \cite{lieb1962ordering}.
\begin{lemma}
For all $k\geq 2$ we have
\[
\lambda_{\mathrm{max}}(S_k)=2k.
\]
\label{lem:star}
\end{lemma}
\begin{proof}
Let the center qubit of the star be labeled $1$, and qubits $2,3,\ldots, k$ be the points. Define total spin operators
\[
\vec{S}_1= \frac{1}{2}(X_1,Y_1, Z_1) \qquad \qquad \vec{S}_{\mathrm{points}}=\frac{1}{2}\left(\sum_{j=2}^{k} X_j, \sum_{j=2}^{k} Y_j, \sum_{j=2}^{k} Z_j\right)
\]
so that the Hamiltonian can be expressed as 
\begin{align}
H_{S_k}& =(k-1)-4\vec{S}_1\cdot \vec{S}_{\mathrm{points}}\\
&=(k-1)-2\left(\vec{S}_1+\vec{S}_{\mathrm{points}}\right)^2+2\vec{S}_1^2+2\vec{S}_{\mathrm{points}}^2
\label{eq:ham}
\end{align}
Now let us use the standard rules for addition of angular momentum. Since each of the qubits is a spin-$1/2$ particle we have 
\[
\vec{S}_1^2=\frac{3}{4} \qquad \quad \vec{S}_{\mathrm{points}}^2=s(s+1) \qquad \quad \left(\vec{S}_1+\vec{S}_{\mathrm{points}}\right)^2\in \{(s+1/2)(s+3/2), (s-1/2)(s+1/2)\},
\]
where $0\leq s\leq (k-1)/2$ is an integer or half-integer representing the total spin of the points.  Plugging into Eq.~\eqref{eq:ham} and selecting $s=(k-1)/2$ and $\left(\vec{S}_1+\vec{S}_{\mathrm{points}}\right)^2=s^2-1/4$ to maximize the energy, we obtain
\[
\lambda_{\mathrm{\max}}(S_k)=(k-1)-2\left(\frac{(k-1)^2-1}{4}\right)+3/2+\frac{1}{2}(k-1)(k+1)=2k.
\]
\end{proof}

\end{comment}

Next, we consider the case of uniform weights $w_{ij}=1$ on an arbitrary connected graph. Using Lemma \ref{lem:star}, we exhibit a product state with approximation ratio better than $\frac{1}{2}$.

\begin{theorem}
Suppose $G=(V,E,w)$ is a connected graph with uniform weights, i.e., $w_{ij}=1$ for all $\{i,j\}\in E$. Then
\begin{equation}
\frac{\mathrm{PROD}(G)}{\mathrm{OPT}(G)}\geq \frac{1}{3}+\frac{2}{3}\left(\frac{|E|}{2|E|+|V|}\right).
\label{eq:statement}
\end{equation}
Moreover, there exists a computational basis state with energy satisfying the above inequality.
\label{thm:dense}
\end{theorem}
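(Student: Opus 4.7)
The plan is to combine two complementary upper bounds on $\mathrm{OPT}(G)$ and then read off the ratio. The first bound is an immediate consequence of Lemma~\ref{lem:star}: for each vertex $v\in V$ let $S_v$ denote the unit-weight star centered at $v$, so every edge is counted in exactly two vertex-centered stars and $2H_G = \sum_{v\in V} H_{S_v}$. Subadditivity of the maximum eigenvalue together with Lemma~\ref{lem:star} (which at unit weights gives $\lambda_{\max}(H_{S_v})\leq \deg(v)+1$) yields
\[
2\,\mathrm{OPT}(G)\leq \sum_{v\in V}\bigl(\deg(v)+1\bigr) = 2|E|+|V|.
\]

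The second and more substantive ingredient bounds $\mathrm{OPT}(G)$ in terms of $\mathrm{MC}(G)$. I would decompose $H_G = H_Z+H_X+H_Y$ with
\[
H_Z = \sum_{\{i,j\}\in E}\tfrac{1}{2}(I-Z_iZ_j),\qquad H_X = -\tfrac{1}{2}\sum_{\{i,j\}\in E}X_iX_j,\qquad H_Y = -\tfrac{1}{2}\sum_{\{i,j\}\in E}Y_iY_j.
\]
Each of these three operators is diagonal in the corresponding Pauli basis, so its maximum eigenvalue is a classical optimization over $\{\pm 1\}^n$. A short calculation gives $\lambda_{\max}(H_Z)=\mathrm{MC}(G)$ directly from the definition of Max Cut, while $\lambda_{\max}(H_X)=\lambda_{\max}(H_Y)= -\tfrac{1}{2}\min_{x\in\{\pm 1\}^n}\sum x_ix_j = \mathrm{MC}(G)-|E|/2$ since $\min_x\sum x_ix_j = |E|-2\,\mathrm{MC}(G)$. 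Subadditivity of $\lambda_{\max}$ then gives the key inequality
\[
\mathrm{OPT}(G)\leq 3\,\mathrm{MC}(G) - |E|.
\]

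To finish I would rewrite the second bound as $\mathrm{MC}(G)\geq (\mathrm{OPT}(G)+|E|)/3$ and substitute the star bound $\mathrm{OPT}(G)\leq (2|E|+|V|)/2$ only into the $|E|/(3\,\mathrm{OPT}(G))$ term:
\[
\frac{\mathrm{MC}(G)}{\mathrm{OPT}(G)}\geq \frac{1}{3}+\frac{|E|}{3\,\mathrm{OPT}(G)}\geq \frac{1}{3}+\frac{2|E|}{3(2|E|+|V|)}.
\]
Since $\mathrm{MC}(G)$ is attained by some $z\in\{\pm 1\}^n$, the computational basis state $|z\rangle$ is a product state with $\langle z|H_G|z\rangle=\mathrm{MC}(G)$, which gives both $\mathrm{PROD}(G)\geq \mathrm{MC}(G)$ and the ``moreover'' clause at once. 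The main conceptual step is the rotated-basis decomposition yielding $\mathrm{OPT}\leq 3\,\mathrm{MC}-|E|$; the star bound alone does not suffice, and the two bounds dovetail cleanly because the $|E|$ term produced by the decomposition is exactly what turns the star bound's $(2|E|+|V|)/2$ into the advertised $2|E|/(3(2|E|+|V|))$ improvement over the trivial ratio $1/3$.
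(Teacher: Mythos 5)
Your proposal is correct and follows essentially the same route as the paper: the star-centered decomposition $2H_G=\sum_v H_{S_v}$ combined with Lemma~\ref{lem:star} to get $\mathrm{OPT}(G)\leq |E|+|V|/2$, and the Pauli-basis decomposition with the symmetry $\lambda_{\max}(H_X)=\lambda_{\max}(H_Y)=\lambda_{\max}(H_Z)$ to get $\mathrm{OPT}(G)\leq 3\,\mathrm{MC}(G)-|E|$, differing only in whether the constant $|E|/2$ is absorbed into $H_Z$ or separated out. The final substitution and the use of the Max Cut string as the witnessing computational basis state also match the paper's argument.
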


\begin{proof}

For any vertex $v\in V$ define a Hamiltonian $h_v$ which has support only on qubit $v$ and its neighbors:
\[
h_v=\sum_{j: \{v,j\}\in E} \frac{1}{2}\left(I- X_v X_j -Y_v Y_j-Z_v Z_j\right).
\]
Note that we may write $H_G=\frac{1}{2}\sum_{v\in V} h_v$, where the factor of $1/2$ compensates for the fact that the Hamiltonian term corresponding to each edge of the graph appears twice on the right hand side. Now using the triangle inequality we get
\begin{equation}
\mathrm{OPT}(G)\leq \frac{1}{2}\sum_{v\in V} \|h_v\|.
\label{eq:triangle}
\end{equation}
Let us write $d_v$ for the degree of vertex $v$. Then
\[
\|h_v\|\leq d_v+1,
\]
where we used Lemma \ref{lem:star}. Substituting in Eq.~\eqref{eq:triangle} gives
\begin{equation}
\mathrm{OPT}(G)\leq \frac{1}{2}\sum_{v\in V} (d_v+1)=|E|+|V|/2.
\label{eq:upp1}
\end{equation}
To see why Eq.~\eqref{eq:upp1} is nontrivial, note that since $G$ is a connected graph on $|V|$ vertices, it satisfies $|E|\geq |V|-1$ (the minimum is attained by a tree). Thus Eq.~\eqref{eq:upp1} implies
\begin{equation}
\mathrm{OPT}(G)\leq \frac{1}{2}(3|E|+1),
\label{eq:upp}
\end{equation}
which improves upon the naive upper bound $\mathrm{OPT}(G)\leq 2|E|$ which is obtained by applying the triangle inequality directly to Eq.~\eqref{eq:hg}. 

We need only a little bit more to get the Theorem from Eq.~\eqref{eq:upp1}.  Let us write
\[
H_G=\frac{|E|}{2}+H^{X}(G)+H^{Y}(G)+H^{Z}(G)
\]
where
\begin{equation}
H^{X}(G)=-\frac{1}{2}\sum_{\{i,j\}\in E} X_iX_j \qquad H^{Y}(G)=-\frac{1}{2}\sum_{\{i,j\}\in E} Y_iY_j\qquad H^{Z}(G)=-\frac{1}{2}\sum_{\{i,j\}\in E} Z_iZ_j.
\end{equation}
We denote their largest eigenvalues as $\lambda_{\mathrm{max}}^{P}(G)$ with $P=X,Y,Z$. Note that these 3 quantities are all equal. Applying the triangle inequality and using this fact gives
\begin{equation}
\mathrm{OPT}(G)\leq \frac{|E|}{2}+3\lambda_{\mathrm{max}}^{Z}(G).
\label{eq:maxbnd}
\end{equation}
Also note that we can lower bound $\mathrm{PROD}(G)$ by the maximum energy of a computational basis state:
\begin{equation}
\mathrm{PROD}(G)\geq \frac{|E|}{2}+\lambda_{\mathrm{max}}^{Z}(G).
\label{eq:prodbnd}
\end{equation}

Now combining Eqs.~(\ref{eq:maxbnd}, \ref{eq:prodbnd}) gives
\[
\mathrm{PROD}(G)\geq \frac{|E|}{2}+\frac{\mathrm{OPT}(G)-|E|/2}{3}=\frac{1}{3}\mathrm{OPT}(G)+\frac{1}{3}|E|.
\]
Therefore
\[
\frac{\mathrm{PROD}(G)}{\mathrm{OPT}(G)}\geq \frac{1}{3}+\frac{1}{3} \frac{|E|}{\mathrm{OPT}(G)}.
\]
Finally, substituting Eq.~\eqref{eq:upp1} in the second term we arrive at Eq.~\eqref{eq:statement} and complete the proof.
\end{proof}

\subsection*{Proof of Theorem \ref{thm:prod1}}
\begin{proof}
Let $T$ be a spanning tree of $G$, which can be computed efficiently and has $|V|-1$ edges. Let $s\in \{0,1\}^n$ be a bit string corresponding to a 2-coloring of $T$, i.e., $s_i\neq s_j$ whenever $\{i,j\}$ is an edge of $T$ (of course, $s$ can also be computed efficiently). Then 
\[
\langle s|H_G|s\rangle=\mathrm{Cut}_G(s)\geq |V|-1,
\]
and combining with Eq.~\eqref{eq:upp1} gives
\[
\frac{\langle s|H_G|s\rangle}{\mathrm{OPT}(G)}\geq \frac{2|V|-2}{2|E|+|V|}.
\]
Putting this together with Theorem \ref{thm:dense} we arrive at
\begin{align}
\frac{\mathrm{PROD}(G)}{\mathrm{OPT}(G)}& \geq \max\left\{\frac{2(|V|-1)}{2|E|+|V|}, \frac{4|E|+|V|}{6|E|+3|V|}\right\}\\
\end{align}
Now let $x=(|V|-1)/|E|$ and note that $x\in [0,1]$, and 
\begin{align}
\frac{\mathrm{PROD}(G)}{\mathrm{OPT}(G)}& \geq \min_{0\leq x\leq 1} \max\left\{\frac{2x}{2+x}, \frac{4+x}{6+3x}\right\}-O(|E|^{-1})\\
&= 4/7-O(|E|^{-1}).
\label{eq:finalratprod}
\end{align}
The randomized approximation algorithm of \ref{thm:prodapprox} outputs an estimate which is an $\alpha$-approximation to $\mathrm{PROD}(G)$ with ratio $\alpha\geq 0.956$. Eq.~\eqref{eq:finalratprod} implies that this estimate is an $r$-approximation of $\mathrm{OPT}(G)$ with $r\geq \alpha\cdot (4/7-O(|E|^{-1}))=0.546-O(|E|^{-1}))$.
\end{proof}

\subsection*{Proof of Theorem \ref{thm:prod2}}
\begin{proof}
Note that in the weighted case we may run through exactly the same arguments used to obtain Eq.~\eqref{eq:statement}.  Eq.~\eqref{eq:upp1} is replaced by
\[
\mathrm{OPT}(G)\leq W+\frac{1}{2}\sum_{v\in V} \max_{e\sim v} w_e 
\]
where $W=\sum_{e\in E}w_e$, and correspondingly we have
\begin{equation}
\frac{\mathrm{PROD}(G)}{\mathrm{OPT}(G)}\geq \frac{1}{3}+\frac{2}{3}\left(\frac{W}{2W+\sum_{v\in V} \max_{e\sim v} w_e}\right).
\label{eq:compenergy}
\end{equation}

Now let us focus on the expression
\[
\sum_{v\in V} \max_{e\sim v} w_e.
\]
We note that this quantity can be trivially upper bounded as $2W$ since each edge can appear at most twice in the sum (once for each of its incident vertices). This naive upper bound is not sufficient for our purposes, and so we perform a more careful analysis below.
\begin{lemma}
We may efficiently compute edge subsets $M,F\subseteq E$ such that $M$ is a matching and $F$ is a forest, and
\[
\sum_{v\in V} \max_{e\sim v} w_e= \sum_{e\in M} w_e+\sum_{e\in F} w_e.
\]
\label{lem:matchfor}
\end{lemma}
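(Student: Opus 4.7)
The plan is to construct $M$ and $F$ from the \emph{chosen-edge digraph}. For each vertex $v\in V$, I would select a single incident edge $e(v)$ achieving $\max_{e\sim v} w_e$ (breaking ties arbitrarily) and form the directed graph $D$ on $V$ with arc $v\to u$ whenever $e(v)=\{u,v\}$. Then $D$ is a functional graph (out-degree $1$ at every vertex), so each weakly connected component of $D$ contains exactly one directed cycle, with trees hanging off it. In the underlying undirected graph of chosen edges, a component is either (A) \emph{unicyclic in the weak sense but with a 2-cycle}, meaning its unique directed cycle is a mutual pair $u\to v,\, v\to u$ that collapses to a single undirected edge (so the undirected picture is actually a tree on the component's vertex set), or (B) has a directed cycle of length $k\ge 3$ (so the undirected picture is unicyclic).

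The key structural observation I would prove first is that \emph{every directed cycle in $D$ of length $k\ge 3$ has all edges of equal weight}. Indeed, for a cycle $v_1\to v_2\to\cdots\to v_k\to v_1$, maximality of $e(v_i)=\{v_i,v_{i+1}\}$ forces $w(v_i,v_{i+1})\ge w(v_{i-1},v_i)$ for every $i$, and closing the cyclic chain of inequalities yields equality throughout. This removes the only obstruction to decomposing into a matching and a forest with the exact weight on the nose.

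With this in hand I would define $M$ and $F$ component-by-component. In a Case (A) component, let $e^\star$ be the mutual edge; place $e^\star$ in $M$ and place \emph{every} chosen edge of the component (including $e^\star$) in $F$. In a Case (B) component, pick any edge $e^\star$ of the length-$k$ cycle; place $e^\star$ in $M$ and place every \emph{other} chosen edge of the component in $F$. Then $M$ has at most one edge per component, hence is a matching; and $F$ restricted to each component is a spanning tree of that component (a tree in Case A, or the unicyclic component minus one cycle edge in Case B), so $F$ is a forest. The weight check is the bookkeeping step: in Case (A) the two endpoints of $e^\star$ each contribute $w(e^\star)$ to $\sum_v \max_{e\sim v} w_e$, matched by $w(e^\star)$ in $M$ plus $w(e^\star)$ in $F$, while non-cycle vertices each contribute exactly one chosen tree edge which is in $F$; in Case (B) the $k$ cycle vertices contribute $kw$ (by the uniform-weight observation), matched by $w$ in $M$ plus $(k-1)w$ in $F$, again with tree vertices contributing their own chosen edge to $F$.

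Efficiency is straightforward: computing all $e(v)$ and building $D$ takes $O(|V|+|E|)$ time, and identifying components and their unique cycles in a functional graph also takes $O(|V|)$ time. The main (really the only) conceptual step is the uniform-weight cycle lemma; everything else is bookkeeping, and I would organize the writeup so that the reader sees that observation immediately after the definition of $D$.
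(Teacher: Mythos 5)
Your proof is correct, but it takes a genuinely different route from the paper's. The paper breaks ties \emph{consistently}: it fixes one global ordering $e_1,\dots,e_m$ of the edges by nondecreasing weight, lets each vertex choose its incident edge $I(v)$ that is maximal in that total order, and sets $F=\{I(v):v\in V\}$ with $M$ the set of edges chosen by both endpoints. Consistent tie-breaking eliminates your Case (B) outright: a directed cycle $v_1\to v_2\to\cdots\to v_k\to v_1$ with $k\ge 3$ would force a strictly increasing cyclic chain in the total order, a contradiction, so each component's unique cycle is a mutual pair and the chosen edges already form a forest (the paper packages this as the cut property of minimum spanning trees for a rescaled weight function with distinct weights, rather than arguing on the functional digraph). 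Your version pays for arbitrary tie-breaking with the equal-weight cycle lemma and the extra unicyclic case, but the construction and bookkeeping are sound: one edge of $M$ per component keeps $M$ a matching, the per-component spanning trees keep $F$ a forest, and the weight identity checks out in both cases, including the deliberate double-counting of the mutual edge (which also occurs in the paper, where $M\subseteq F$). Both arguments run in time $O(|V|+|E|)$. The one step worth making explicit in your Case (A) is why the undirected chosen-edge graph is a tree: a weakly connected functional-graph component on $p$ vertices whose unique cycle is a $2$-cycle yields exactly $p-1$ distinct undirected edges (only the mutual pair collapses), and a connected graph on $p$ vertices with $p-1$ edges is a tree.
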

\begin{proof}
Let us fix an ordering $e_1,e_2,\ldots, e_m$ of all the edges of $G$ such that 
\[
w_{e_1}\leq w_{e_2}\leq \ldots \leq w_{e_m}
\]
(if all edge weights are distinct there is a unique such ordering, otherwise there is some freedom in the choice). Now for each vertex $v\in V$ we let $I(v)\in E$ be the (unique) edge incident to $v$ which is maximal with respect to the above ordering. We define
\begin{align*}
F& =\left\{I(v): v\in V\right\}\\
M& =\left\{e\in E: e=I(v) \quad \text{and} \quad e=I(w) \quad \text{for two distinct vertices} \quad v\neq w\in V\right\}.
\end{align*}
At most one edge $e=I(v)$ incident to any given vertex $v$ can appear in $M$, and hence $M$ is a matching. To see that $F$ is a forest, consider a graph $G'=(V,E,w')$ with the same vertex and edge sets as $G$, but where the edge weights are rescaled so that $w'(e_j)=-j$ for $1\leq j\leq m$ (in particular, all edge weights are negative, distinct, and their magnitudes respect our chosen ordering). Then each edge of $F$ is contained in any minimum spanning tree of $G'$,  by the well-known cut property of minimum spanning trees. We infer that $F$ does not contain any cycles, and is therefore a forest.
\end{proof}

Now let $M,F$ be as in the lemma, and define the set of vertices $U\subseteq V$ which are not incident to an edge in $M$. Consider a random variable
\[
|\phi_z\rangle=\left(\bigotimes_{e=\{i,j\}\in M} \frac{1}{\sqrt{2}}\left(|01\rangle-|10\rangle\right)_{ij}\right)\otimes |z\rangle_{U} 
\]
where $z\in \{0,1\}^{|U|}$ is a uniformly random bit string. Then
\begin{equation}
\mathbb{E}_{z}[ \langle \phi_z| H_{G}|\phi_z\rangle]= 2\sum_{e\in M} w_e+\frac{1}{2}(W-\sum_{e\in M} w_e)=\frac{3}{2}m+\frac{1}{2}W \qquad \qquad m\equiv\sum_{e\in M} w_e.
\label{eq:matchenergy}
\end{equation}
This shows that there exists a state $\phi_z$ with energy at least $\frac{3}{2}m+\frac{1}{2}W$.

Finally, since $F$ is a forest, we may efficiently compute a computational basis state $s\in \{0,1\}^n$ such that
\begin{equation}
\langle s|H_G|s\rangle\geq f \qquad \qquad f\equiv\sum_{e\in F} w_e.
\label{eq:forestenergy}
\end{equation}

This follows since the Max Cut for a forest is achieved by an efficiently computable 2-coloring of the vertices. Putting together Eqs.~(\ref{eq:compenergy},\ref{eq:matchenergy},\ref{eq:forestenergy}) and Lemma \ref{lem:matchfor}, we see that there exists a tensor product $\phi$ of $1$- and $2$-qubit states such that
\begin{align}
\frac{\langle\phi|H_G|\phi\rangle}{\mathrm{OPT}(G)}&\geq \max\left\{\frac{2f}{2W+f+m}, \frac{3m+W}{2W+f+m}, \frac{1}{3}+\frac{2}{3}\left(\frac{W}{2W+f+m}\right)   \right\}\\
&\geq \ \min_{0\leq x\leq y\leq 1} \frac{1}{2+y+x}\max\left\{2y,3x+1, \frac{1}{3}(4+x+y)   \right\}\\
&\geq 0.55,
\end{align}
where in the second line we set $x=m/W, y=f/W$, and in the last line we used a computer.

Now let us bound the approximation ratio achieved by an efficient randomized algorithm. First note that the state $|s\rangle$ in Eq.~\eqref{eq:forestenergy} can be computed efficiently. Moreover, using Eq.~\eqref{eq:matchenergy} and the fact that $\langle \phi_z| H_{G}|\phi_z\rangle$ is a random variable upper bounded by $2W$ we see that the probability of randomly sampling a bit string $z$ with energy at least $\frac{3}{2}m+0.49W$ is 
\[
\mathrm{Pr}\left[\langle \phi_z| H_{G}|\phi_z\rangle\geq \frac{3}{2}m+0.49W\right]\geq \frac{0.01}{1.51}.
\]
By randomly sampling $O(1)$ times, with high probability we will obtain a bit string with this energy. Finally, note that combining Theorem \ref{thm:prodapprox} with Eq.~\eqref{eq:compenergy} we get a randomized algorithm that outputs a state with energy at least
\[
0.956\cdot \left(\frac{1}{3}+\frac{2}{3}\left(\frac{W}{2W+f+m}\right)\right).
\]
Thus we may efficiently compute a state with approximation ratio at least
\begin{align}
\min_{0\leq x\leq y\leq 1} \frac{1}{2+y+x}\max\left\{2y,3x+0.98, \frac{0.956}{3}(4+x+y)   \right\}
\geq 0.53.
\end{align}
\end{proof}

\section{Low degree regular graphs}

In this section we consider the case of $3$- or $4$-regular graphs and we establish Theorem \ref{thm:lowdegree}.

Given an $n$-vertex graph $G=(V,E,w)$, we shall consider the following algorithm. First, we use the classical Goemans-Williamson algorithm \cite{goemans1995improved} to compute a bit string $z\in \{0,1\}^n$ satisfying Eq.~\eqref{eq:gw}. This defines a partition of the edges into those which are satisfied and those which are not:
\begin{equation}
E_{\mathrm{sat}}=\{\{u,v\}\in E: z_u\neq z_v\} \qquad \qquad E_{\mathrm{unsat}}=E\setminus E_{\mathrm{sat}}.
\label{eq:partition}
\end{equation}

Next we define a Pauli operator $P(j)$ for each qubit $1\leq j\leq n$, which depends on the $j$-th bit of $z$:
\[
P(j)={\begin{cases}X_j, & z_j=1\\ Y_j, & z_j=0\end{cases}}.
\]
Finally, we define a variational state
\begin{equation}
|\phi(\theta)\rangle=V(\theta)|z\rangle \quad \text{where} \quad V(\theta)=\exp\left(\sum_{\{j,k\} \in E} i\theta P(j)P(k)\right).
\label{eq:phi}
\end{equation}
Here $\theta\in \mathbb{R}$ is a parameter that we will choose later. Note that $V(\theta)$ can be expressed as a product of \textit{commuting} $2$-qubit gates
\begin{equation}
V(\theta)=\prod_{\{j,k\} \in E} \exp\left(i\theta P(j)P(k)\right).
\label{eq:V}
\end{equation}
Moreover, if the graph $G$ has maximum degree $\Delta$ then we may efficiently compute an edge coloring with $\Delta+1$ colors such that no two edges with the same color share a vertex. If we order the  gates Eq.~\eqref{eq:V} in $\Delta+1$ layers according to this edge coloring we obtain a depth $\Delta+1$ quantum circuit that implements $V(\theta)$. 

\begin{figure}
\centering
\begin{tikzpicture}

\draw[fill=black] (0,0) circle (1pt);
\draw (-.25,0) node {i}; 
\draw (2.25,0) node {j}; 
\draw[fill=black] (2,0) circle (1pt);
\draw[fill=black] (1,-2) circle (1pt);
\draw[fill=black] (1,-1) circle (1pt);
\draw[fill=black] (1,1) circle (1pt);
\draw[fill=black] (1,2) circle (1pt);
\draw (0,0)--(2,0);
\draw (0,0)--(1,1)--(2,0);
\draw (0,0)--(1,-1)--(2,0);
\draw (0,0)--(1,2)--(2,0);
\draw (0,0)--(1,-2)--(2,0);
\end{tikzpicture}
\caption{An edge $\{i,j\}$ contained in 4 triangles.\label{fig:Ttriangles}}
\end{figure}
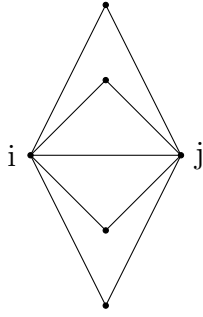

The following lemma describes the energy of the variational state $\phi(\theta)$. Below we write $d_j$ for the degree of vertex $j\in V$. We say that an edge $\{i,j\}\in E$ is contained in $T$ triangles iff there are vertices $k_1,k_2,\ldots, k_T$ such that $\{i,k_1\},\{i,k_2\},\ldots, \{i,k_T\}\in E$ and $\{j,k_1\},\{j,k_2\},\ldots, \{j,k_T\}\in E$. This is depicted in Fig. \ref{fig:Ttriangles}.

\begin{lemma}
Let $G = (V,E,w)$ be a graph and let $\phi(\theta)$ be the variational state defined in Eq.~\eqref{eq:phi}. If $\{i,j\}\in E_{\mathrm{sat}}$ is a satisfied edge contained in exactly $T$ triangles then
\\ \begin{equation}
\langle \phi (\theta) | 2h_{ij} | \phi (\theta) \rangle = 1 + \sin(2\theta)\cos^{d_i-1}(2\theta) + \sin(2\theta)\cos^{d_j-1}(2\theta) + 
\frac{1 + \cos^{T}(4\theta)}{2}\cos^{d_i+d_j-2-2T}(2\theta).\label{eq:sat1}
\end{equation} 
On the other hand, if $\{i,j\}\in E_{\mathrm{unsat}}$ is an unsatisfied edge contained in exactly $T$ triangles, then
\\\begin{align}
\langle \phi (\theta) |2 h_{ij} | \phi (\theta) \rangle& = 1-\cos^{d_i + d_j - 2-2T}(2\theta).\label{eq:unsat1}
\end{align} 
\label{lem:gengraph}
\end{lemma}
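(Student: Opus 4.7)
The plan is to compute each of $\langle\phi(\theta)|X_iX_j|\phi(\theta)\rangle$, $\langle\phi(\theta)|Y_iY_j|\phi(\theta)\rangle$, and $\langle\phi(\theta)|Z_iZ_j|\phi(\theta)\rangle$ separately in the Heisenberg picture, and then reassemble via $2h_{ij} = I - X_iX_j - Y_iY_j - Z_iZ_j$. I would first simplify the initial vector: setting $U = \prod_{v: z_v = 1}X_v$, one has $|z\rangle = U|0^n\rangle$, and a direct check (either $U$ does not touch the qubit, or $U$ applies $X_v$ which commutes with $P(v)=X_v$) gives $U^\dagger P(v)U = P(v)$ for every $v$, and hence $U^\dagger V(\theta)U = V(\theta)$. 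Consequently $\langle\phi|O|\phi\rangle = \langle 0^n|V(\theta)^\dagger (U^\dagger O U)V(\theta)|0^n\rangle$, where $U^\dagger X_iX_j U = X_iX_j$ but $U^\dagger Y_iY_j U$ and $U^\dagger Z_iZ_j U$ each pick up a factor of $(-1)^{z_i+z_j}$. This single sign is the algebraic source of the satisfied/unsatisfied dichotomy in the statement.

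The second ingredient is that the Pauli generators $H_e = P(a_e)P(b_e)$ pairwise commute (edges sharing no vertex act on disjoint qubits; edges sharing a vertex $v$ both contain the same $P(v)$, which commutes with itself). Since $e^{-i\theta H_e}Oe^{i\theta H_e}$ equals $O$ when $[H_e,O]=0$ and equals $e^{-2i\theta H_e}O$ when $\{H_e,O\}=0$, and since the $H_e$ all commute, an induction gives
\[
V(\theta)^\dagger O V(\theta) = \prod_{e\in E^{a}(O)}\!\bigl(\cos(2\theta)I - i\sin(2\theta)H_e\bigr)\cdot O, \qquad E^{a}(O) = \{e\in E : \{H_e,O\}=0\}.
\]
Identifying $E^{a}(O)$ is then a short commutation check depending only on the letters $P(i),P(j)$: e.g.\ in the satisfied case $(z_i,z_j)=(1,0)$, $E^{a}(X_iX_j) = \{\{i,j\}\}\cup\{\{j,b\}:b\in N_j\setminus\{i\}\}$, $E^{a}(Y_iY_j) = \{\{i,j\}\}\cup\{\{i,a\}:a\in N_i\setminus\{j\}\}$, and $E^{a}(Z_iZ_j)$ consists of all edges incident to $i$ or $j$ except $\{i,j\}$ itself.

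I would then expand the product over subsets $A\subseteq E^{a}(O)$ and evaluate $\langle 0^n|\cdot|0^n\rangle$, which retains only $Z$-diagonal Pauli strings. Any vertex $u\notin\{i,j\}$ selected by $A$ contributes a factor $P(u)\in\{X_u,Y_u\}$ that must cancel against another copy from the same $A$ in order to survive. For $X_iX_j$ and $Y_iY_j$ in the satisfied case this forces the trivial choice and produces the two isolated terms $\sin(2\theta)\cos^{d_j-1}(2\theta)$ and $\sin(2\theta)\cos^{d_i-1}(2\theta)$ in Eq.~\eqref{eq:sat1}; for $Z_iZ_j$ the surviving $A$'s are exactly the even-cardinality subsets of the triangle-vertex set $\mathcal{T} = (N_i\cap N_j)\setminus\{i,j\}$, with a size-$k$ subset contributing $\binom{T}{k}\cos^{d_i+d_j-2-2k}(2\theta)\sin^{2k}(2\theta)$; applying the identity $\sum_{k\text{ even}}\binom{T}{k}x^{k} = \tfrac{1}{2}[(1+x)^T+(1-x)^T]$ together with $\cos^2(2\theta)-\sin^2(2\theta)=\cos(4\theta)$ collapses this to $\tfrac{1}{2}(1+\cos^T(4\theta))\cos^{d_i+d_j-2-2T}(2\theta)$. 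In the unsatisfied case the commutation structure is different: one of $X_iX_j,Y_iY_j$ is stabilized by every $H_e$ (and hence contributes $0$), while the other combines with $Z_iZ_j$ to give the clean $1-\cos^{d_i+d_j-2-2T}(2\theta)$ of Eq.~\eqref{eq:unsat1}. The main obstacle is the painstaking bookkeeping of the $\pm 1,\pm i$ factors from repeated Pauli products ($XZ=-iY$, $YX=-iZ$, etc.) across the four sub-cases $(z_i,z_j)\in\{00,01,10,11\}$; the $i\leftrightarrow j$ symmetry of the final formula and the $(-1)^{z_i+z_j}$ sign from the rotation step collapse these four sub-cases into the two written in the statement.
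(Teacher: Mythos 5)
Your proposal is correct and follows essentially the same route as the paper's proof: conjugate $X_iX_j$, $Y_iY_j$, $Z_iZ_j$ through the commuting gates, expand each $\prod_e(\cos(2\theta)I - i\sin(2\theta)H_e)$ over subsets of anticommuting edges, observe that only the trivial subset survives for the $XX$/$YY$ terms while the $ZZ$ term picks up even-cardinality sets of triangle vertices, and collapse the resulting sum with the binomial identity $\sum_{k\,\mathrm{even}}\binom{T}{k}x^k=\tfrac12[(1+x)^T+(1-x)^T]$. Your preliminary conjugation by $U=\prod_{v:z_v=1}X_v$ to reduce to $|0^n\rangle$ and extract the $(-1)^{z_i+z_j}$ sign uniformly is a tidy bookkeeping variation, but it does not change the substance of the argument, which the paper carries out directly on $|z\rangle$ by case analysis.
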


We defer the proof of the Lemma until the end of this section. Let us now see how Lemma \ref{lem:gengraph} can be used to lower bound the energy $\langle \phi(\theta)|H_G|\phi(\theta)\rangle$ when $G$ is a $3$- or $4$-regular graph. In fact, we will only need Eq.~\eqref{eq:sat1} for the proof below; Eq.~\eqref{eq:unsat1} is included only for completeness.

\begin{proof}[Proof of Theorem \ref{thm:lowdegree}]

In a $d$-regular graph, each edge may be contained in $T\leq d-1$ triangles. Note that the energy Eq.~\eqref{eq:sat1} of a satisfied edge is lower bounded by the same expression with $T=0$ since the last term is monotonically increasing with $T$. Thus all satisfied edges in a $d$-regular graph have energy lower bounded as

\begin{equation}
\langle \phi (\theta) | 2h_{ij} | \phi (\theta) \rangle \geq  1 + 2\cos^{d-1}(2\theta)\sin(2\theta) + \cos^{2d-2}(2\theta)
\label{eq:satdreg}
\end{equation} 

An unsatisfied edge in a $d$-regular graph always contributes a nonnegative energy since the Hamiltonian terms $h_{ij}$ are positive semidefinite and the weights $w_{ij}$ are nonnegative.

Thus for a $d$-regular graph $G$ we have
\begin{equation}
\langle \phi (\theta) | H_G | \phi (\theta) \rangle \geq \frac{F(\theta,d)}{2}\sum_{\{i,j\}\in E_{\mathrm{sat}}}w_{ij}=\frac{F(\theta,d)}{2}\mathrm{Cut}_G(z),
\label{eq:dreg}
\end{equation} 
where
\[
F(\theta,d)=1+2\cos^{d-1}(2\theta)\sin(2\theta)+\cos^{2d-2}(2\theta).
\]
For a fixed $d$ we may compute $\theta^{\star}(d)=\mathrm{argmax}_{\theta} F(\theta,d)$ which maximizes the right hand side. Also note that since $z$ is the output of the Goemans-Williamson approximation algorithm, it satisfies Eq.~\eqref{eq:gw} and therefore 
\begin{equation}
\frac{\langle \phi (\theta) | H_G | \phi (\theta)\rangle}{\mathrm{SDP}(G)}\geq G(d)\equiv (0.8785)\cdot \frac{F(\theta^{\star}(d),d)}{2}.
\label{eq:defG}
\end{equation}
Using a computer we find $G(3)=1.047\ldots$ and $G(4)=1.001\ldots$, which completes the proof.
\end{proof}
\begin{proof}[Proof of Lemma \ref{lem:gengraph}]
We shall compute
\begin{equation}
\langle \phi | 2h_{ij} | \phi \rangle = 1 - \langle \phi | X_iX_j | \phi \rangle - \langle \phi | Y_iY_j | \phi \rangle - \langle \phi | Z_iZ_j | \phi \rangle
\label{eq:3terms}
\end{equation}
(here and below we write $\phi\equiv \phi(\theta)$ for ease of notation).

We treat the two cases separately: satisfied edges $\{i,j\}\in E_{\mathrm{sat}}$ (i.e., $z_i \neq z_j$) and unsatisfied edges $\{i,j\}\in E_{\mathrm{unsat}}$ (i.e., $z_i=z_j$).
 
\paragraph{Satisfied edge:} Without loss of generality, assume that $z_i=0$ and $z_j=1$ (else we perform the same calculation with $i$ and $j$ interchanged). Using the standard commutation relations between Pauli operators, and the fact that the set of operators $\{P(k)P(\ell)\}_{(k,\ell)\in E}$ mutually commute we get
\begin{align*}
&\langle \phi | X_iX_j | \phi \rangle \\
&= \langle z |\bigg(\prod_{\{k,\ell\} \in E}\exp(-i\theta P(k) P(\ell))\bigg)X_iX_j\bigg(\prod_{\{k,\ell\} \in E}\exp (i\theta P(k) P(\ell))\bigg)|z \rangle\\
&= \langle z |\bigg(\prod_{k:\{i,k\} \in E}\exp(-2i\theta Y_i P(k))\bigg)X_iX_j|z \rangle\\
&= \langle z | \prod_{k:\{i,k\} \in E} \br{\cos(2\theta) - i\sin(2\theta) Y_i P(k)}X_iX_j|z \rangle\\
&= -i\cos^{d_i-1}(2\theta)\sin(2\theta)\langle z | Y_i P(j)X_iX_j|z \rangle =-\cos^{d_i-1}(2\theta)\sin(2\theta).
\end{align*}
Here, the second last equality follows since $\langle z_k|P(k)|z_k\rangle=0$ for all $k\neq i,j$. A similar calculation shows that
\begin{align*}
\langle \phi | Y_iY_j | \phi \rangle &= -\cos^{d_j-1}(2\theta)\sin(2\theta).
\end{align*}
Finally, for the last term in Eq.~\eqref{eq:3terms}, we will need to take the triangles into account: 
\begin{align}
&\langle \phi | Z_iZ_j | \phi \rangle \\&= \langle z |\bigg(\prod_{\{k,\ell\} \in E}\exp(-i\theta P(k) P(\ell))\bigg)Z_iZ_j\bigg(\prod_{\{k,\ell\} \in E}\exp (i\theta P(k) P(\ell))\bigg)|z \rangle\nonumber\\
&= \langle z |\bigg(\prod_{k:k\neq i,\{k,j\} \in E}\exp(-2i\theta P(k)X_j)\prod_{k:k\neq j,\{i,k\} \in E}\exp(-2i\theta Y_iP(k))\bigg)Z_iZ_j|z \rangle\nonumber\\
&= -\langle z | \prod_{k: k \neq i,\{k,j\} \in E} (\cos(2\theta) - i\sin(2\theta) P(k) X_j)\prod_{k \neq j:\{i,k\} \in E} (\cos(2\theta) - i\sin(2\theta) Y_i P(k))|z \rangle\label{eq:prodterms}\\
&= -\sum_{a = 0}^{\lfloor {T \over 2} \rfloor} \binom{T}{2a} \sin^{4a}(2\theta)\cos^{d_i + d_j - 2 - 4a}(2\theta))\label{eq:evensum}
\end{align}
Where the last equality is obtained by noting that a pair of triangles $\{i,j,k\}$ and $\{i,j,l\}$ will give a term $(-i\sin(2\theta))^4Y_iP(k)Y_iP(l)X_jP(k)X_jP(l) = \sin^4(2\theta)\cdot I$ inside the expectation value in Eq.~\eqref{eq:prodterms}. The summation in Eq.~\eqref{eq:evensum} runs over all even cardinality subsets of triangles. Thus, when $z_i\neq z_j$, 
\begin{align*}
&\langle \phi (\theta) | 2h_{ij} | \phi (\theta) \rangle =\\
& 1 + \sin(2\theta)\cos^{d_i-1}(2\theta) + \sin(2\theta)\cos^{d_j-1}(2\theta)  +
\sum_{a = 0}^{\lfloor {T \over 2} \rfloor} \binom{T}{2a} \sin^{4a}(2\theta)\cos^{d_i + d_j - 2 - 4a}(2\theta)\\
&= 1 + \sin(2\theta)\cos^{d_i-1}(2\theta) + \sin(2\theta)\cos^{d_j-1}(2\theta)+
\frac{1}{2}\left(\frac{1}{\cos^{2T}(2\theta)} + \frac{\cos^{T}(4\theta)}{\cos^{2T}(2\theta)}\right)\cos^{d_i+d_j-2}(2\theta)\\
&= 1 + \sin(2\theta)\cos^{d_i-1}(2\theta) + \sin(2\theta)\cos^{d_j-1}(2\theta)+
 \frac{1 + \cos^{T}(4\theta)}{2}\cos^{d_i+d_j-2-2T}(2\theta).
\end{align*} 
Here, the second equality follows from the binomial expansion of $\frac{1}{2}\left((1+x)^T+(1-x)^T\right)$ for $x=\frac{\sin^2(2\theta)}{\cos^2(2\theta)}$.

\paragraph{Unsatisfied edge:} Suppose $z_i=z_j=0$. Then $P(i)=Y_i$ and $P(j)=Y_j$ and so 
\begin{align*}
\langle \phi | Y_iY_j | \phi \rangle &=  \langle z | Y_iY_j|z \rangle=0.
\end{align*}
On the other hand, 
\begin{align}
&\langle \phi | X_iX_j | \phi \rangle= \langle z |\bigg(\prod_{\{k,\ell\} \in E}\exp(- i\theta P(k) P(\ell))\bigg)X_iX_j\bigg(\prod_{\{k,\ell\} \in E}\exp(i\theta P(k) P(\ell))\bigg)|z \rangle\nonumber\\
&= \langle z |\bigg(\prod_{k: k \neq j,\{i,k\} \in E} \exp(-2i\theta Y_i P(k))\bigg)\bigg(\prod_{k: k \neq i, \{k,j\} \in E}\exp(- 2i\theta P(k) Y_j)\bigg)X_iX_j|z \rangle\nonumber\\
&= \langle z |\prod_{k: k \neq j,\{i,k\} \in E} (\cos(2\theta) - i\sin(2\theta) Y_i P(k))\prod_{k: k \neq i, \{k,j\} \in E} (\cos(2\theta) - i\sin(2\theta) P(k) Y_j)X_iX_j|z \rangle\nonumber\\
&= \sum_{a = 1}^{\lfloor {T + 1 \over 2} \rfloor} \binom{T}{2a-1} \sin^{4a-2}(2\theta)\cos^{d_i + d_j - 4a}(2\theta).\label{eq:oddtriangles}.
\end{align}
The summation in Eq.~\eqref{eq:oddtriangles} runs over all odd cardinality subsets of triangles. Finally,
\begin{align*}
&\langle \phi | Z_iZ_j | \phi \rangle \\
&= \langle z | \prod_{k: k \neq j,\{i,k\} \in E} (\cos(2\theta) - i\sin(2\theta) Y_i P(k))\prod_{k: k \neq i,\{k,j\} \in E} (\cos(2\theta) - i\sin(2\theta) P_k Y_j)Z_iZ_j|z \rangle\\
&= \sum_{a = 0}^{\lfloor {T \over 2} \rfloor} \binom{T}{2a} \sin^{4a}(2\theta)\cos^{d_i + d_j - 2 - 4a}(2\theta)\\
\end{align*}
Where again, we have used the fact that any pair of triangles will result in an identity term. 

If $z_i=z_j=1$, similar calculations show that the contributions from $\langle \phi | Y_iY_j | \phi \rangle$ and $\langle \phi | X_iX_j | \phi \rangle$ are interchanged, but that their sum is unchanged. So for any unsatisfied edge, we have added lines to this equation:
\begin{align*}
\langle \phi | 2h_{ij} | \phi \rangle &= 1 -\sum_{a = 1}^{\lfloor {T + 1 \over 2} \rfloor} \binom{T}{2a-1} \sin^{4a-2}(2\theta)\cos^{d_i + d_j - 4a}(2\theta) - \sum_{a = 0}^{\lfloor {T \over 2} \rfloor} \binom{T}{2a} \sin^{4a}(2\theta)\cos^{d_i + d_j - 2 - 4a}(2\theta)\\
&=1-\cos^{d_i + d_j - 2}(2\theta)\left(\sum_{b = 0}^{T} \binom{T}{b} \frac{\sin^{2b}(2\theta)}{\cos^{2b}(2\theta)}\right)=1-\cos^{d_i + d_j - 2-2T}(2\theta),
\end{align*}
where we used the binomial expansion of $(1+x)^T$ for $x=\frac{\sin^2(2\theta)}{\cos^2(2\theta)}$.
\end{proof}

\section{Acknowledgments}
AA is supported by the Canadian Institute for Advanced Research, through funding provided to the Institute for Quantum Computing by the Government of Canada and the Province of Ontario. Perimeter Institute is also supported in part by the Government of Canada and the Province of Ontario. DG acknowledges the support of the Natural Sciences and Engineering Research Council of Canada, IBM Research, and the Canadian Institute for Advanced Research. We thank Sevag Gharibian, Eunou Lee, and Ojas Parekh for comments and helpful discussions. KM is grateful for support from Vanier CGS and the TQT Quantum Graduate Visitors Program. 
\bibliographystyle{plain}
\bibliography{bibliog}

\begin{thebibliography}{10}

\bibitem{qip2020report}
Hallgren, Lee, Parekh 2019. Announced in a contributed talk at QIP 2020 in
  Shenzhen, China.

\bibitem{arora1998proof}
Sanjeev Arora, Carsten Lund, Rajeev Motwani, Madhu Sudan, and Mario Szegedy.
\newblock Proof verification and the hardness of approximation problems.
\newblock {\em Journal of the ACM (JACM)}, 45(3):501--555, 1998.

\bibitem{bansal2007classical}
Nikhil Bansal, Sergey Bravyi, and Barbara~M Terhal.
\newblock Classical approximation schemes for the ground-state energy of
  quantum and classical ising spin hamiltonians on planar graphs.
\newblock {\em arXiv preprint arXiv:0705.1115}, 2007.

\bibitem{brandao2016product}
Fernando~GSL Brandao and Aram~W Harrow.
\newblock Product-state approximations to quantum states.
\newblock {\em Communications in Mathematical Physics}, 342(1):47--80, 2016.

\bibitem{bravyi2019approximation}
Sergey Bravyi, David Gosset, Robert K{\"o}nig, and Kristan Temme.
\newblock Approximation algorithms for quantum many-body problems.
\newblock {\em Journal of Mathematical Physics}, 60(3):032203, 2019.

\bibitem{briet2010positive}
Jop Bri{\"e}t, Fernando~M{\'a}rio de~Oliveira~Filho, and Frank Vallentin.
\newblock The positive semidefinite grothendieck problem with rank constraint.
\newblock In {\em International Colloquium on Automata, Languages, and
  Programming}, pages 31--42. Springer, 2010.

\bibitem{charikar2004maximizing}
Moses Charikar and Anthony Wirth.
\newblock Maximizing quadratic programs: Extending grothendieck's inequality.
\newblock In {\em 45th Annual IEEE Symposium on Foundations of Computer
  Science}, pages 54--60. IEEE, 2004.

\bibitem{farhi2014quantum}
Edward Farhi, Jeffrey Goldstone, and Sam Gutmann.
\newblock A quantum approximate optimization algorithm.
\newblock {\em arXiv preprint arXiv:1411.4028}, 2014.

\bibitem{gharibian2012approximation}
Sevag Gharibian and Julia Kempe.
\newblock Approximation algorithms for qma-complete problems.
\newblock {\em SIAM Journal on Computing}, 41(4):1028--1050, 2012.

\bibitem{GharibianLiu20}
Sevag Gharibian and Yi-Kai Liu.
\newblock Approximation algorithms for the quantum heisenberg model.
\newblock {\em Private communication}, 2020.

\bibitem{gharibian2019almost}
Sevag Gharibian and Ojas Parekh.
\newblock Almost optimal classical approximation algorithms for a quantum
  generalization of max-cut.
\newblock {\em Approximation, Randomization, and Combinatorial Optimization.
  Algorithms and Techniques (APPROX/RANDOM 2019)}, 145:31, 2019.

\bibitem{goemans1995improved}
Michel~X Goemans and David~P Williamson.
\newblock Improved approximation algorithms for maximum cut and satisfiability
  problems using semidefinite programming.
\newblock {\em Journal of the ACM (JACM)}, 42(6):1115--1145, 1995.

\bibitem{harrow2017extremal}
Aram~W Harrow and Ashley Montanaro.
\newblock Extremal eigenvalues of local hamiltonians.
\newblock {\em Quantum}, 1:6, 2017.

\bibitem{kempe2004complexity}
Julia Kempe, Alexei Kitaev, and Oded Regev.
\newblock The complexity of the local hamiltonian problem.
\newblock In {\em International Conference on Foundations of Software
  Technology and Theoretical Computer Science}, pages 372--383. Springer, 2004.

\bibitem{kitaev2002classical}
Alexei~Yu Kitaev, Alexander Shen, and Mikhail~N Vyalyi.
\newblock {\em Classical and quantum computation}.
\newblock Number~47. American Mathematical Soc., 2002.

\bibitem{lieb1962ordering}
Elliott Lieb and Daniel Mattis.
\newblock Ordering energy levels of interacting spin systems.
\newblock {\em Journal of Mathematical Physics}, 3(4):749--751, 1962.

\bibitem{merris1998note}
Russell Merris.
\newblock A note on laplacian graph eigenvalues.
\newblock {\em Linear algebra and its applications}, 285(1-3):33--35, 1998.

\bibitem{piddock2015complexity}
Stephen Piddock and Ashley Montanaro.
\newblock The complexity of antiferromagnetic interactions and 2d lattices.
\newblock {\em arXiv preprint arXiv:1506.04014}, 2015.

\end{thebibliography}
 \end{document}